\documentclass[conference,10pt]{IEEEtran}
\usepackage{graphicx,epsfig,graphics,epsf}
\usepackage{float,amsmath,amsfonts,amssymb,amsthm, amssymb,stmaryrd,amsmath,amsfonts,rotating,mathrsfs}
\usepackage{url}
\usepackage[center,small]{caption}
\usepackage{algorithm}
\usepackage{algorithmic}

\addtolength{\textheight}{-0.2cm}

\newtheorem{thm}{Theorem}%[section]
%[section]
%[section]
\newtheorem{lem}{Lemma}%[section]
%[section]
%[section]
%[section]

\begin{document}

% \title{On the Construction of Polar Codes}
% \date{}
% \author{ \authorblockN{Ramtin Pedarsani, S. Hamed Hassani, Emre Telatar}
% \authorblockA{Ecole Polytechnique F\'ed\'erale de Lausanne\\
% {\tt \small \{ramtin.pedarsani,seyedhamed.hassani,emre.telatar\}@epfl.ch}}
% }

\title{On the Construction of Polar Codes}
\author{
\IEEEauthorblockN{Ramtin Pedarsani}
\IEEEauthorblockA{School of Computer and \\ Communication Systems,\\ EPFL\\
Lausanne, Switzerland.\\
ramtin.pedarsani@epfl.ch}

\and
\IEEEauthorblockN{S. Hamed Hassani}
\IEEEauthorblockA{School of Computer and \\ Communication Systems,\\ EPFL\\
Lausanne, Switzerland.\\
seyedhamed.hassani@epfl.ch}

\and
\IEEEauthorblockN{Ido Tal}
\IEEEauthorblockA{Information Theory and\\ Applications,\\ UCSD\\
La Jolla, CA, USA.\\
idotal@ieee.org}
\and
\IEEEauthorblockN{Emre Telatar}
\IEEEauthorblockA{School of Computer and \\ Communication Systems,\\ EPFL\\
Lausanne, Switzerland.\\
emre.telatar@epfl.ch}
}
\maketitle
\begin{abstract}
We consider the problem of efficiently constructing polar codes over binary memoryless symmetric (BMS)
channels.  The complexity of designing polar codes via an exact
evaluation of the polarized channels to find which ones are ``good"
appears to be exponential in the block length.  In \cite{TV11}, Tal and Vardy
show that if instead the evaluation if performed approximately, the
construction has only linear complexity.  In this paper, we follow this
approach and present a framework where the algorithms of \cite{TV11} and new
related algorithms can be analyzed for complexity and accuracy.  We
provide numerical and analytical results on the efficiency of such
algorithms, in particular we show that one can find all the ``good"
channels (except a vanishing fraction) with almost linear complexity in
block-length (except a polylogarithmic factor). 
\end{abstract}

\section{Introduction}\label{sec:def}
\subsection{Polar Codes}
Polar coding, introduced by Ar{\i}kan in \cite{ari09}, is an encoding/decoding scheme that provably achieves the capacity of the class of BMS channels.  Let $W$ be a BMS channel. Given the rate $R<I(W)$, polar coding  is based on choosing a set of $2^nR$ rows of the matrix $G_n= \bigl [ \begin{smallmatrix} 1 & 0 \\ 1 &1  \end{smallmatrix} \bigr]^{\otimes n}$ to form a $2^nR \times 2^n$ matrix which is used as the generator matrix in the encoding procedure\footnote{There are extensions of polar codes given in \cite{SK09} which use different kinds of matrices.}.  The way this set is chosen is dependent on the channel $W$ and uses a phenomenon called channel polarization:  Consider an infinite binary tree and place the underlying  channel 
$W$ on the root node and continue recursively as follows. Having the channel $P: \{0,1\} \to \cal Y$ on a node of the tree, define the channels $P^-: \{0,1\} \to {\cal Y}^2$ and $P^+: \{0,1\} \to \{0,1\} \times  {\cal Y}^2$ 
\begin{align} \label{1}
& P^-(y_1,y_2 | x_1)= \sum_{x_2 \in \{0,1\} } \frac 12 P(y_1| x_1 \oplus x_2) P(y_2|x_2)  \\ \label{2}
& P^+(y_1,y_2,x_1 | x_2)= \frac 12 P(y_1 | x_1 \oplus x_2) P(y_2 | x_2),
\end{align}
and place $P^-$ and $P^+$ as the left and right children of this node. As a result, at level $n$ there are $N=2^n$ channels which we denote from left to right by $W_{N}^{1} $ to $W_{N}^{N}$ . In \cite{ari09}, Ar{\i}kan proved that as $n \to \infty$, a fraction approaching $I(W)$ of the channels at level $n$ have capacity close to $1$ (call them ``noiseless'' channels) and a fraction approaching $1-I(W)$ have capacity close to $0$ (call them ``completely noisy'' channels).   
Given the rate $R$, the indices of the matrix $G_n$ are chosen as follows: choose a subset of the channels $\{W_{N}^{(i)}\}_{1 \leq i \leq N}$ with the most mutual information and choose the rows $G_n$ with the same indices as these channels. For example, if the channel $W_{N}^{(j)}$ is chosen, then the $j$-th row of $G_n$ is selected, up to the bit-reversal permutation. In the following, given $n$, we call the set of indices of $NR$ channels with the most mutual information, the set of good indices.  

We can equivalently say that as $n \to \infty$ the fraction of channels with Bhattacharyya constant near $0$ approaches $I(W)$ and the fraction of  channels with Bhattacharyya constant near $1$ approaches $1-I(W)$. The Bhattacharyya constant of a channel $P: \{0,1\} \to \cal Y$ is given by
\begin{equation}
Z(P)= \sum_{y \in \cal Y} \sqrt{P(y | 0) P(y | 1)}.
\end{equation}
Therefore, we can alternatively call the set of indices of $NR$ channels with least Bhattacharyya parameters, the set of good indices. It is also interesting to mention that the sum of the Bhattacharyya parameters of the chosen channels is an upper bound on the block error probability of polar codes when we use the successive cancellation decoder.

\subsection{Problem Formulation}
\label{sec:prob}
 
Designing a polar code is equivalent to finding the set of good indices. The main difficulty in this task is that, since the output alphabet of $W_{N}^{(i)}$ is $\mathcal{Y} ^ {N} \times \{0,1\} ^i$, the cardinality of the output alphabet of the  channels at the level $n$ of the binary tree is doubly exponential in $n$ or is exponential in the block-length. So computing the exact transition probabilities of these channels seems to be intractable and hence we need some efficient methods to ``approximate'' these channels.

In \cite{ari09}, it is suggested to use a Monte-Carlo method for estimating the Bhattacharyya parameters. Another method in this regard is by \emph{quantization} \cite{TV11,Compound,Tanaka}, \cite[Appendix B]{Urbanke}: approximating the given  channel with a channel that has fewer output symbols. More precisely, given a number $k$, the task is to come up with efficient methods to replace channels that have more that $k$ outputs with ``close'' channels that have at most $k$ outputs. Few comments in this regard are the following:
\begin{itemize}
\item The term ``close'' above depends on the definition of the quantization error which can be different depending on the context. In our problem, in its most general setting we can define the quantization error as the difference between the true set of good indices and the approximate set of good indices. However, it seems that analyzing this type of error may be difficult and in the sequel we consider types of errors that are easier to analyze.  
\item Thus, as a compromise, will intuitively think of two channels as being close if they are close with respect to some given metric; typically mutual information but sometimes probability of error. More so, we require that this closeness is in the right direction: the approximated channel must be a ``pessimistic'' version of the true channel. Thus, the approximated set of good channels will be a subset of the true set.
%\item To have  a good probability of error in the decoding procedure, it is necessary for the approximate set of indices to be a subset of the true set since according to the properties of the successive cancellation decoder if we make a mistake in finding the correct value of any of the information bits, the whole block is decoded with error or more precisely the block error probability is lower bounded by the error probabilities of any of the selected channels.  Therefore, we may have some rate loss  depending on the quantization method that we use and how large the value of $k$ is. 
\item Intuitively, we expect that as $k$ increases the overall error due to quantization decreases; the main art in designing the quantization methods is to have a small error while using relatively small values of $k$. However, for any quantization algorithm an important property is that as $k$ grows large, the approximate set of good indices using the quantization algorithm with $k$ fixed  approaches the true set of good indices. We give a precise mathematical definition in the sequel.    
\end{itemize}

Taking the above mentioned factors into account, a suitable formulation of  the quantization problem is to find procedures to replace each channel $P$ at each level of the binary tree with another symmetric channel $\tilde{P}$ with the number of output symbols limited to $k$ such that firstly, the set of good indices obtained with this procedure is a subset of the true good indices obtained from the channel polarization i.e. channel $\tilde{P}$ is \emph{polar degraded} with respect to $P$, and secondly the ratio of these good indices is maximized. More precisely, we start from channel $W$ at the root node of the binary tree, quantize it to $\tilde{W}$ and obtain $\tilde{W}^{-}$ and $\tilde{W}^{+}$ according to \eqref{1} and \eqref{2}. Then, we quantize the two new channels and continue the procedure to complete the tree. To state things mathematically, let $Q_k$ be a quantization procedure that assigns to each channel $P$ a binary symmetric channel $ \tilde{P}$ such that the output alphabet of $\tilde{P}$ is limited to a constant $k$. We call $Q_k$  admissible if for any $i$ and $n$
\begin{equation}\label{ad1}
I(\tilde{W}_{N}^{(i)}) \leq I(W_{N}^{(i)}).
\end{equation}
One can alternatively call $Q_k$  admissible if for any $i$ and $n$
\begin{equation}\label{ad2}
Z(\tilde{W}_{N}^{(i)}) \geq Z(W_{N}^{(i)}).
\end{equation}
Note that \eqref{ad1} and \eqref{ad2} are essentially equivalent as $N$ grows large.
Given an admissible procedure $Q_k$ and a BMS channel $W$, let $\rho(Q_k,W)$ be\footnote{Instead of $\frac 12$ in \eqref{ro} we can use any number in $(0,1)$.} 
\begin{equation} \label{ro}
\rho(Q_k,W)= \lim_{n \to \infty} \frac{| \{ i: I(\tilde{W}_{N}^{(i)}) > \frac 12 \} |}{N} 
\end{equation}
So the quantization problem is that given a number $k \in \mathbb{N}$ and a channel $W$, how can we find admissible procedures $Q_k$ such that $\rho(Q_k,W)$ is maximized and is close to the capacity of $W$. Can we reach the capacity of $W$ as $k$ goes to infinity? Are such schemes universal in the sense that they work well for all the BMS  channels? It is worth mentioning that if we first let $k$ tend to infinity and then $n$ to infinity then the limit is indeed the capacity, but we are addressing a different question here, namely we first let $n$ tend to infinity and then $k$ (or perhaps couple $k$ to $n$). In Section \ref{sec:ex}, we indeed prove that such schemes exist.  

\section{Algorithms for Quantization}\label{sec:algo}

\subsection{Preliminaries}\label{sec:pre}
Any discrete BMS channel can be represented as a collection of binary symmetric channels (BSC's). The binary input is given to one of these BSC's at random such that the $i$-th BSC is chosen with probability $p_i$. The output of this BSC together with its cross over probability $x_i$ is considered as the output of the channel. Therefore, a discrete BMS channel $W$ can be completely described by a random variable $\chi \in [0,1/2]$. The \emph{pdf} of $\chi$ will be of the form:
\begin{align}
P_\chi(x) = \sum_{i=1}^m p_i \delta(x - x_i)
\end{align}
such that $\sum_{i=1}^m p_i = 1$ and $0 \leq x_i \leq 1/2$. Note that $Z(W)$ and $1-I(W)$ are expectations of the functions $f(x)=2\sqrt{x(1-x)}$ and $g(x)=-x\log(x)-(1-x) \log(1-x)$ over the distribution $P_{\chi}$, respectively.  
 
Therefore, in the quantization problem we want to replace the mass distribution $P_\chi$ with another mass distribution $P_{\tilde{\chi}}$ such that the number of output symbols of $\tilde{\chi}$ is at most $k$, and the channel $\tilde{W}$ is polar degraded with respect to $W$. We know that the following two operations imply polar degradation: 
\begin{itemize}
\item Stochastically degrading the channel.
\item Replacing the channel with a BEC channel with the same Bhattacharyya parameter.
\end{itemize}
Furthermore, note that the \emph{stochastic dominance} of random variable $\tilde{\chi}$ with respect to $\chi$ implies $\tilde{W}$ is stochastically degraded with respect to $W$. (But the reverse is not true.)  

In the following, we propose different algorithms based on different methods of polar degradation of the channel. The first algorithm is a naive algorithm called the mass transportation algorithm based on the stochastic dominance of the random variable $\tilde{\chi}$, and the second one which outperforms the first is called greedy  mass merging algorithm. For both of the algorithms the quantized channel is stochastically degraded with respect to the original one. 
\subsection{Greedy Mass Transportation Algorithm}
 In the most general form of this algorithm we basically look at the problem as a \emph{mass transport} problem. In fact, we have non-negative masses $p_i$ at locations $x_i, i=1,\cdots,m, x_1 < \cdots < x_m$.
What is required is to move the masses, by only moves to the right, to concentrate them on $k < m$ locations, and try to minimize $\sum_i p_i d_i$ where $d_i=x_{i+1} - x_i$ is the amount $i^{th}$ mass has moved.  
Later, we will show that this method is not optimal but useful in the theoretical analysis of the algorithms that follow.

\begin{algorithm}
\caption{Mass Transportation Algorithm}
\label{mass}
\begin{algorithmic}[1]
\STATE Start from the list $(p_1,x_1), \cdots , (p_m,x_m)$.
\STATE Repeat $m-k$ times
\STATE Find $j = \text{argmin} \{ p_i d_i: \, i \neq m \} $
\STATE Add $p_j$ to $p_{j+1}$ (i.e. move $p_j$ to $x_{j+1}$)
\STATE Delete $(p_j, x_j)$ from the list.

\end{algorithmic}
\end{algorithm}

Note that Algorithm \ref{mass} is based on the stochastic dominance of random variable $\tilde{\chi}$ with respect to $\chi$. Furthermore, in general, we can let $d_i = f(x_{i+1}) - f(x_i)$, for an arbitrary increasing function $f$. 

\subsection{Mass Merging Algorithm}

The second algorithm merges the masses. Two masses $p_1$ and $p_2$ at positions $x_1$ and $x_2$ would be merged into one mass $p_1 + p_2$ at position $\bar{x}_1 = \frac{p_1}{p_1 + p_2} x_1 + \frac{p_2}{p_1 + p_2} x_2$. This algorithm is based on the stochastic degradation of the channel, but the random variable $\chi$ is not stochastically dominated by $\tilde{\chi}$. The greedy algorithm for the merging of the masses would be the following:

\begin{algorithm}[h!]
\caption{Merging Masses Algorithm}
\label{merge}
\begin{algorithmic}[1]
\STATE Start from the list $(p_1,x_1), \cdots , (p_m,x_m)$.
\STATE Repeat $m-k$ times
\STATE Find $j = \text{argmin} \{ p_i ( f(\bar{x}_i) - f(x_i) ) - p_{i+1} ( f(x_{i+1}) - f(\bar{x}_i) ): ~ i \neq m \} \quad \bar{x}_i = \frac{p_i}{p_i + p_{i+1}} x_i + \frac{p_{i+1}}{p_i + p_{i+1}} x_{i+1}$
\STATE Replace the two masses $(p_j, x_j)$ and $(p_{j+1}, x_{j+1})$ with a single mass $(p_j + p_{j+1} , \bar{x}_j)$.

\end{algorithmic}
\end{algorithm}

Note that in practice, the function $f$ can be any increasing concave function, for example, the entropy function or the Bhattacharyya function. In fact, since the algorithm is greedy and suboptimal, it is hard to investigate explicitly how changing the function $f$ will affect the total error of the algorithm in the end (i.e., how far $\tilde{W}$ is from $W$). 

\section{Bounds on the Approximation Loss}\label{sec:error1}
In this section, we provide some bounds on the maximum approximation loss we have in the algorithms. We define the ``approximation loss'' to be the difference between the expectation of the function $f$ under the true distribution $P_\chi$ and the approximated distribution $P_{\tilde{\chi}}$. Note that the kind of error that is analyzed in this section is different from what was defined in Section \ref{sec:prob}. The connection of the approximation loss with the quantization error is made clear in Theorem \ref{thm:1}. For convenience, we will simply stick to the word ``error'' instead of ``approximation loss'' from now on.  

We first find an upper bound on the error made in Algorithms \ref{mass} and \ref{merge} and then use it to provide bounds on the error made while performing operations \eqref{1} and \eqref{2}. 

\begin{lem}\label{lem:error}
The maximum error made by Algorithms \ref{mass} and \ref{merge} is upper bounded by $\mathcal{O}(\frac{1}{k})$.
\end{lem}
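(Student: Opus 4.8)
The plan is to bound the error \emph{per step} of each algorithm and then sum the per-step errors over the $m-k$ steps, arranging the bound so that it telescopes to $\mathcal{O}(1/k)$ and, importantly, does not depend on $m$. Fix the increasing function $f$ driving the algorithm (for Algorithm~\ref{merge} also concave, e.g.\ entropy or Bhattacharyya), set $d_i = f(x_{i+1}) - f(x_i)$ and $C = f(1/2) - f(0)$. The key invariant I would maintain is that throughout either algorithm the surviving locations $x_1 < \cdots < x_\ell$ satisfy $\sum_{i=1}^{\ell-1} d_i = f(x_\ell) - f(x_1) \le C$ while $\sum_{i=1}^{\ell} p_i = 1$: the total mass is only relabelled and combined, never created, and the extreme surviving locations stay in $[0,1/2]$ (they move only rightward under transportation but never past $1/2$ since the rightmost mass is never selected, and only inward under merging).

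First I would pin down the per-step error. In Algorithm~\ref{mass}, moving $p_j$ from $x_j$ to $x_{j+1}$ changes $\EE[f]$ by exactly $p_j d_j$, which is the quantity the algorithm minimizes over the $\ell-1$ eligible indices. In Algorithm~\ref{merge}, replacing $(p_j,x_j),(p_{j+1},x_{j+1})$ by $(p_j+p_{j+1},\bar x_j)$ changes $\EE[f]$ by $e_j = p_j\bigl(f(\bar x_j)-f(x_j)\bigr) - p_{j+1}\bigl(f(x_{j+1})-f(\bar x_j)\bigr)$, exactly the criterion minimized in line~3 of the algorithm; since $f$ is increasing and $x_j < \bar x_j < x_{j+1}$, both parenthesized differences are non-negative, so $0 \le e_j \le p_j\bigl(f(x_{j+1})-f(x_j)\bigr) = p_j d_j$ (non-negativity of $e_j$ itself follows from concavity of $f$ and the fact that $\bar x_j$ is the $p$-weighted average of $x_j,x_{j+1}$). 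Hence in \emph{both} algorithms, a step taken while $\ell$ locations remain incurs error at most $\min_{1\le i\le \ell-1} p_i d_i$.

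Next I would bound that minimum by a Cauchy--Schwarz (pigeonhole) estimate: $\sum_{i=1}^{\ell-1}\sqrt{p_i d_i} \le \bigl(\sum_{i=1}^{\ell-1} p_i\bigr)^{1/2}\bigl(\sum_{i=1}^{\ell-1} d_i\bigr)^{1/2} \le \sqrt{C}$, so the smallest of the $\ell-1$ terms $\sqrt{p_i d_i}$ is at most $\sqrt{C}/(\ell-1)$, giving $\min_i p_i d_i \le C/(\ell-1)^2$. Summing over the stages, which run through pre-step counts $\ell = m, m-1, \dots, k+1$, the total error is at most $\sum_{\ell=k+1}^{m} C/(\ell-1)^2 = C\sum_{s=k}^{m-1} s^{-2} \le C\sum_{s\ge k} s^{-2} \le C/(k-1) = \mathcal{O}(1/k)$, uniformly in $m$ and in the starting configuration.

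The step I expect to need the most care is the invariant bookkeeping in the previous paragraphs: verifying that $\sum_i d_i \le C$ and $\sum_i p_i = 1$ really do persist after arbitrarily many past moves (in particular that a merge at the current first or last index keeps the extreme location inside $[0,1/2]$ and keeps the list sorted, since $x_j \le \bar x_j \le x_{j+1}$), so that the Cauchy--Schwarz estimate applies verbatim at every stage. A secondary, routine point is checking that the specific $f$ of interest is increasing and concave on $[0,1/2]$ — immediate for the entropy function ($C=\log 2$) and for the Bhattacharyya function $f(x)=2\sqrt{x(1-x)}$ ($C=1$) — which is what legitimizes the sign analysis of $e_j$ and fixes the implied constant in $\mathcal{O}(1/k)$.
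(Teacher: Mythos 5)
Your proof is correct and follows essentially the same route as the paper's: bound the per-step error, use the Cauchy--Schwarz/pigeonhole argument on $\sum\sqrt{p_i d_i}$ to get a $\mathcal{O}(1/\ell^2)$ step bound when $\ell$ masses remain, show Algorithm~\ref{merge}'s per-step error is dominated by Algorithm~\ref{mass}'s, and sum $\sum_{\ell>k}1/\ell^2=\mathcal{O}(1/k)$. The extra care you take with the invariant $\sum_i d_i\le C$, $\sum_i p_i=1$ across steps is a welcome bit of rigor that the paper glosses over (it normalizes $f(1/2)-f(0)=1$ and leaves the persistence implicit), but it does not change the underlying argument.
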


\begin{proof}
First, we derive an upper bound on the error of Algorithms \ref{mass} and \ref{merge} in each iteration, and therefore a bound on the error of the whole process. Let us consider Algorithm \ref{mass}. The problem can be reduced to the following optimization problem:
\begin{align}
e = \max_{p_i,x_i} \min_{i} (p_i d_i)
\end{align}
such that 
\begin{align} 
\sum_{i} p_i = 1, ~ \sum_{i} d_i & \leq 1,
\end{align}
where $d_i = f(x_{i+1}) - f({x_i})$, and $f(\frac 12)-f(0)=1$ is assumed w.l.o.g. We prove the lemma by Cauchy-Schwarz inequality.
\begin{align}
\min_{i} p_i d_i = \left( \sqrt{\min_{i} p_i d_i} \right )^2 
= \left(\min_{i} \sqrt{p_i d_i} \right) ^2
\end{align}
Now by applying Cauchy-Schwarz we have
\begin{align}
\sum_{i=1}^{m} \sqrt{ p_i d_i} \leq \left(\sum_{i=1}^{m} p_i \right)^{1/2} \left(\sum_{i=1}^{m} d_i \right)^{1/2} \leq 1 
\end{align}
Since the sum of $m$ terms $\sqrt{ p_i d_i}$ is less than $1$, the minimum of the terms will certainly be less than $\frac{1}{m}$. Therefore,  
\begin{align}\label{10}
e = \left( \min \sqrt{p_i d_i}\right) ^2 \leq \frac{1}{m^2}.
\end{align}
For Algorithm \ref{merge}, achieving the same bound as Algorithm \ref{mass} is trivial. Denote $e^{(1)}$ the error made in Algorithm \ref{mass} and $e^{(2)}$ the error made in Algorithm \ref{merge}. Then,

\begin{align}
e_i^{(2)} & = p_i \left( f(\bar{x}_i) - f(x_i) \right) - p_{i+1} \left( f(x_{i+1}) - f(\bar{x}_i) \right) \\
& \leq p_i \left( f(\bar{x}_i) - f(x_i) \right) \\ \label{outperform}
& \leq p_i \left( f(x_{i+1}) - f(x_i) \right) = e_i^{(1)}.
\end{align}

Consequently, the error generated by running the whole algorithm can be upper bounded by
$\sum_{i = k+1}^m \frac{1}{i^2} $ which is $\mathcal{O}(\frac{1}{k})$.

\end{proof} 
  
What is stated in Lemma \ref{lem:error} is a loose upper bound on the error of Algorithm \ref{merge}. To achieve better bounds, we upper bound the error made in each iteration of the Algorithm \ref{merge} as the following:
\begin{align}
e_i &= p_i \left( f(\bar{x}_i) - f(x_i) \right) - p_{i+1} \left( f(x_{i+1}) - f(\bar{x}_i) \right)   \\\label{concave}
& \leq p_i \frac{p_{i+1}}{p_i + p_{i+1}} \Delta x_i f'(x_i) - p_{i+1} \frac{p_i}{p_i + p_{i+1}} \Delta x_i f'(x_{i+1}) \\
&= \frac{p_i p_{i+1}}{p_i + p_{i+1}} \Delta x_i \left(f'(x_i) - f'(x_{i+1}) \right) \\ \label{MVT} 
& \leq \frac{p_i + p_{i+1}}{4} \Delta x_i^2 |f''(c_i)|,
\end{align}
where $\Delta x_i = x_{i+1} - x_{i}$ and \eqref{concave} is due to concavity of function $f$. Furthermore, \eqref{MVT} is by the mean value theorem, where $x_i \leq c_i \leq x_{i+1}$.

If $|f''(x)|$ is bounded for $x \in (0,1)$, then we can prove that $\min_{i} e_i \sim \mathcal{O}(\frac{1}{m^3})$ similarly to Lemma \ref{lem:error}. Therefore the error of the whole algorithm would be $\mathcal{O}(\frac{1}{k^2})$. Unfortunately, this is not the case for either of entropy function or Bhattacharyya function. However, we can still achieve a better upper bound for the error of Algorithm \ref{merge}.     

\begin{lem}\label{lem:error2}
The maximum error made by Algorithm \ref{merge} for the entropy function $h(x)$ can be upper bounded by the order of $\mathcal{O}(\frac{\log(k)}{k^{1.5}})$.
\end{lem}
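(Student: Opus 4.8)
The plan is to bound the error introduced at a \emph{single} merge step of Algorithm~\ref{merge} as a function of the number $j$ of masses present at that step, and then sum over the $m-k$ steps. Two elementary facts about one step are needed. First, by concavity of $h$ every merge only increases $\EE[h]$, and the increase produced by merging $(p_i,x_i)$ with $(p_{i+1},x_{i+1})$ is exactly the quantity $e_i$ appearing in Algorithm~\ref{merge}; hence the total error equals $\sum_{j=k+1}^{m} e^{\min}(j)$, where $e^{\min}(j)=\min_i e_i$ is evaluated on the configuration present when $j$ masses remain. Second, I will use the two per‑step estimates already derived in the excerpt, namely $e_i\le p_i\bigl(h(x_{i+1})-h(x_i)\bigr)$ (from \eqref{outperform}) and $e_i\le \tfrac{p_i+p_{i+1}}{4}\,\Delta x_i^2\,|h''(c_i)|$ with $c_i\in[x_i,x_{i+1}]$ (from \eqref{MVT}). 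The obstacle, of course, is that $|h''(x)|=\tfrac{1}{x(1-x)\ln 2}$ blows up as $x\to0$, so the clean $\mathcal{O}(1/j^3)$ argument sketched right after \eqref{MVT} fails near the origin.

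To handle this I would fix a threshold $\delta=\delta_j:=1/\sqrt{j}$ and split the $j$ masses into a ``high'' group $H$ in $[\delta,1/2]$ (say $b$ of them, carrying mass $\mu_H\le1$) and a ``low'' group $L$ in $[0,\delta)$ (say $a=j-b$ of them, carrying mass $\mu_L\le1$). On $[\delta,1/2]$ one has $|h''(c_i)|\le 2/(\delta\ln 2)$, so for each of the $b-1$ consecutive pairs inside $H$, $e_i\le \tfrac{1}{2\delta\ln 2}(p_i+p_{i+1})\Delta x_i^2$; applying the AM--GM inequality separately to $\prod(p_i+p_{i+1})$ and to $\bigl(\prod\Delta x_i\bigr)^2$, using $\sum_{i\in H}(p_i+p_{i+1})\le 2\mu_H$ and $\sum_{i\in H}\Delta x_i\le\tfrac12$, yields
\begin{align}
\min_{i\in H} e_i \ \le\ \frac{C_1}{\delta\,(b-1)^3}.
\end{align}
For the low group the key is the telescoping bound: from \eqref{outperform}, $e_i\le p_i\bigl(h(x_{i+1})-h(x_i)\bigr)$, and over the $a-1$ consecutive pairs inside $L$ the increments $h(x_{i+1})-h(x_i)$ sum to at most $h(\delta)$ while the $p_i$ sum to at most $\mu_L\le1$; Cauchy--Schwarz, exactly as in the proof of Lemma~\ref{lem:error}, then gives
\begin{align}
\min_{i\in L} e_i \ \le\ \frac{h(\delta)}{(a-1)^2}.
\end{align}

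Finally I would combine the two estimates. Since $a+b=j$, at least one of $a,b$ is $\ge j/2$; using $h(\delta_j)=h(1/\sqrt j)=\mathcal{O}\!\bigl(\tfrac{\log j}{\sqrt j}\bigr)$, the first estimate gives $e^{\min}(j)=\mathcal{O}(\delta_j^{-1}j^{-3})=\mathcal{O}(j^{-5/2})$ when $b\ge j/2$, and the second gives $e^{\min}(j)=\mathcal{O}\!\bigl(h(\delta_j)\,j^{-2}\bigr)=\mathcal{O}\!\bigl(\tfrac{\log j}{j^{5/2}}\bigr)$ when $a\ge j/2$; the degenerate cases $a\le1$ or $b\le1$ are handled directly by whichever of the two estimates applies. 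Hence $e^{\min}(j)=\mathcal{O}\!\bigl(\tfrac{\log j}{j^{5/2}}\bigr)$ at every step, and summing over $j=k+1,\dots,m$,
\begin{align}
\text{total error}\ =\ \sum_{j=k+1}^{m} e^{\min}(j)\ =\ \sum_{j=k+1}^{m}\mathcal{O}\!\Bigl(\frac{\log j}{j^{5/2}}\Bigr)\ =\ \mathcal{O}\!\Bigl(\frac{\log k}{k^{3/2}}\Bigr).
\end{align}
The main difficulty, and what dictates the choice $\delta_j\sim 1/\sqrt j$, is balancing these two contributions: the ``high'' bound degrades like $\delta^{-1}$ as $\delta$ shrinks (because $|h''|\sim\delta^{-1}$ there), while the ``low'' bound improves like $h(\delta)\sim\delta\log(1/\delta)$, and $\delta=j^{-1/2}$ is, up to the logarithmic factor we are willing to concede, the crossover point. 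Gaining the extra power of $j$ in the low‑group bound---rather than settling for the crude $e_i\le p_i h(\delta)$---rests on the Cauchy--Schwarz/telescoping step, which is the one genuinely new ingredient compared with Lemma~\ref{lem:error}; I expect verifying it, together with the routine $h(1/\sqrt j)=\mathcal{O}(\log j/\sqrt j)$ estimate and the degenerate cases, to be the only place requiring care.
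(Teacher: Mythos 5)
Your proof is correct and follows essentially the same route as the paper's: split the masses at a threshold of order $1/\sqrt{j}$, use the bound $|h''(x)|\le 2/(x\ln 2)$ for the masses above the threshold and the telescoping bound $\sum \Delta h \le h(1/\sqrt{j})=\mathcal{O}(\log j/\sqrt{j})$ for those below, obtain a per-step bound of $\mathcal{O}(\log j/j^{5/2})$, and sum. The only cosmetic difference is that you invoke AM--GM where the paper uses H\"older's inequality to extract the $\mathcal{O}(1/(b-1)^3)$ factor for the high group; both yield the same estimate, and your handling of the current mass count $j$ and the degenerate cases is if anything slightly more careful than the paper's.
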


\begin{proof}
See Appendix. 
\end{proof}

\vskip 5pt

We can see that the error is improved by a factor of $\frac{\log{k}}{\sqrt{k}}$ in comparison with Algorithm \ref{mass}.

Now we use the result of Lemma \ref{lem:error} to provide bounds on the total error made in estimating the mutual information of a channel after $n$ levels of operations \eqref{1} and \eqref{2}.

\begin{thm}\label{thm:1}
Assume $W$ is a BMS channel and using Algorithm~\ref{mass} or \ref{merge} we quantize the channel $W$ to a channel $\tilde{W}$. Taking $k=n^2$ is sufficient to give an approximation error that decays to zero. 
\end{thm}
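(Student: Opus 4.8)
The plan is to control not the mutual-information gap of a \emph{single} synthetic channel $\tilde W^{(i)}_N$ — which may grow exponentially in $n$ — but the \emph{average} gap over all $N=2^n$ channels at level $n$. I would fix the increasing concave function used inside Algorithms~\ref{mass}--\ref{merge} to be the binary entropy $h$, so that $h(0)=0$, $h(1/2)=1$, and $\EE[h(\chi)] = 1-I(W)$; then the approximation loss in $h$ is exactly $I(W)-I(\tilde W)$, and Lemma~\ref{lem:error} says that one quantization step costs at most $\epsilon_k = \mathcal{O}(1/k)$, \emph{uniformly} in the number of input symbols (the bound $\sum_{i>k} i^{-2}$ does not depend on $m$). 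Admissibility \eqref{ad1} makes every gap below non-negative. Set $\delta^{(i)}_n = I(W^{(i)}_N)-I(\tilde W^{(i)}_N)\ge 0$ and $\bar\delta_n = N^{-1}\sum_{i=1}^N \delta^{(i)}_n$.

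The key algebraic fact is conservation of mutual information under the polar transform: $I(P^-)+I(P^+)=2I(P)$ for every channel, in particular for the true channels and for the two children obtained from an already-quantized channel \emph{before} they are requantized. Writing $V^{(2j-1)}_N,V^{(2j)}_N$ for the children of $\tilde W^{(j)}_{N/2}$ produced by \eqref{1}--\eqref{2} prior to requantization, we have $I(V^{(2j-1)}_N)+I(V^{(2j)}_N)=2I(\tilde W^{(j)}_{N/2})$, and Lemma~\ref{lem:error} gives $0\le I(V^{(i)}_N)-I(\tilde W^{(i)}_N)\le \epsilon_k$ for each $i$. Subtracting the analogous true identity $I(W^{2j-1}_N)+I(W^{2j}_N)=2I(W^{(j)}_{N/2})$ and summing over $j$ yields $\sum_i \delta^{(i)}_n = 2\sum_j \delta^{(j)}_{n-1} + \sum_i\bigl(I(V^{(i)}_N)-I(\tilde W^{(i)}_N)\bigr)$, hence the recursion $\bar\delta_n \le \bar\delta_{n-1} + \epsilon_k$. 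Since the initial quantization of $W$ contributes $\bar\delta_0 \le \epsilon_k$, we get $\bar\delta_n \le (n+1)\,\epsilon_k = \mathcal{O}(n/k)$. Taking $k=n^2$ gives $\bar\delta_n = \mathcal{O}(1/n)\to 0$, which is the claimed decay.

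To pass from a vanishing average gap to the statement about good indices, I would then apply Markov's inequality: for every $\gamma>0$ the fraction of indices with $\delta^{(i)}_n>\gamma$ is at most $\bar\delta_n/\gamma\to 0$. Combined with Ar{\i}kan's polarization theorem — the fraction of true channels with $I(W^{(i)}_N)\in(\gamma,1-\gamma)$ tends to $0$ — this shows the quantized process also polarizes and that $\rho(Q_{n^2},W)\to I(W)$, which is the quantization-error consequence advertised for Theorem~\ref{thm:1}.

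The main obstacle is precisely the phenomenon flagged in the introduction: a per-channel estimate is hopeless, because the polar transform can enlarge a mutual-information gap by a factor close to $2$ at each level (this is exactly the mechanism that makes one child more reliable and the other less reliable), so $n$ levels could in principle inflate a single gap by $2^n$. The resolution is that this amplification is conservative \emph{in aggregate} — the two child gaps always sum to twice the parent gap — so the correct potential to track is the average gap, into which the fresh $\mathcal{O}(1/k)$ quantization error enters only additively. The remaining technical points to verify are minor: the child channels have at most $2k^2$ outputs before requantization, so Lemma~\ref{lem:error} applies at every level; and a continuous initial $W$ can be pre-discretized at negligible cost so that $\bar\delta_0\le\epsilon_k$ holds as well.
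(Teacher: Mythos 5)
Your proof is correct and follows essentially the same route as the paper: both rely on the conservation identity $I(P^-)+I(P^+)=2I(P)$ to show the per-level quantization losses accumulate only additively in the average, giving an average gap of $\mathcal{O}(n/k)$, and both then choose $k=n^2$ and apply a Markov-type argument to conclude that all but a vanishing fraction of channels have vanishing distortion. Your write-up makes the recursion $\bar\delta_n \le \bar\delta_{n-1}+\epsilon_k$ more explicit and spells out the Markov step and the link to polarization, but there is no substantive difference from the paper's argument.
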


\begin{proof}
First notice that for any two BMS channels $W$ and $V$, doing the polarization operations \eqref{1} and \eqref{2}, the following is true: 
\begin{equation}\label{eq:I}
(I(W^-) - I(V^-)) + (I(W^+) - I(V^+)) = 2 (I(W) - I(V))
\end{equation}
Replacing $V$ with $\tilde{W}$ in \eqref{eq:I} and using the result of Lemma \ref{lem:error}, we conclude that after $n$ levels of polarization the sum of the errors in approximating the mutual information of the $2^n$ channels is upper-bounded by $\mathcal{O}(\frac{n2^n}{k})$. In particular, taking $k=n^2$, one can say that the ``average'' approximation error of the $2^n$ channels at level $n$ is upper-bounded by $\mathcal{O}(\frac{1}{n})$. Therefore, at least a fraction $1-\frac{1}{\sqrt{n}}$ of the channels are distorted by at most $\frac{1}{\sqrt{n}}$ i.e., except for a negligible fraction of the channels the error in approximating the mutual information decays to zero. 

\end{proof}

As a result, since the overall complexity of the encoder construction is $\mathcal{O}(k^2N)$, this leads to ``almost linear'' algorithms for encoder construction with arbitrary accuracy in identifying good channels.

\section{Exchange of Limits} \label{sec:ex}
In this section, we show that there are admissible schemes such that as $k \to \infty$, the limit in \eqref{ro} approaches $I(W)$ for any BMS channel $W$. We use the definition stated in \eqref{ad2} for the admissibility of the quantization procedure.
\begin{thm} \label{thm2}
Given a BMS channel $W$ and for large enough $k$, there exist admissible quantization schemes $Q_k$ such that $\rho(Q_k, W)$ is arbitrarily close to $I(W)$.
\end{thm}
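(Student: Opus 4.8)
\emph{Proof proposal.} The plan is to exhibit one admissible scheme $Q_k$ (a single channel-to-channel map applied at every node) for which $\rho(Q_k,W)$ can be evaluated essentially in closed form, and then let $k\to\infty$. Quantizing with the merging algorithm at every level is not obviously adequate here, since the per-step bound of Lemma~\ref{lem:error} accumulates over the infinitely many levels that enter \eqref{ro}. Instead I would use the second admissible operation from Section~\ref{sec:pre}: define $Q_k$ to act as the identity on any BMS channel with at most $k$ outputs, and to return the BEC with the same Bhattacharyya parameter on any channel with more than $k$ outputs. A BEC has three outputs and the operations \eqref{1}--\eqref{2} send a BEC of erasure probability $\epsilon$ to the BECs of erasure probabilities $2\epsilon-\epsilon^2$ and $\epsilon^2$; hence once a node is replaced by a BEC, all of its descendants are un-quantized BECs, and at every node $Q_k$ is either the identity or a BEC-replacement. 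Both of these are polar degradations, so $Q_k$ is admissible in the sense of \eqref{ad2} (and, $N$ large, of \eqref{ad1}, so $\rho(Q_k,W)\le I(W)$ automatically).

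Next I would compute $\rho(Q_k,W)$ by tracking a uniformly random channel, i.e.\ a random $\pm$ path with independent fair steps. Let $Z_n$ be the Bhattacharyya parameter of the exact channel at level $n$ along the path, and let $\ell^\ast=\ell^\ast(k)$ be the first level at which its output alphabet exceeds $k$ (set $\ell^\ast=\infty$ if the alphabet stays bounded along that path, in which case the path is never quantized). When $\ell^\ast<\infty$, for $n>\ell^\ast$ the quantized channel at level $n$ is a BEC whose erasure probability is the BEC recursion run $n-\ell^\ast$ times from $Z_{\ell^\ast}$; that recursion is the classical erasure-probability martingale, so it converges almost surely to a $\{0,1\}$-valued limit that, conditionally on the path up to level $\ell^\ast$, equals $0$ with probability $1-Z_{\ell^\ast}$. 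Thus $\mathbf 1[I(\tilde W_N^{(i)})>\tfrac12]$ converges almost surely, and by bounded convergence
\[
\rho(Q_k,W)=1-\EE\bigl[Z_{\ell^\ast(k)}\bigr],
\]
with the convention $Z_{\ell^\ast}:=Z_\infty$ on $\{\ell^\ast=\infty\}$, where $Z_\infty\in\{0,1\}$ so the formula still holds (on such paths $\mathbf 1[I(W_n)>\tfrac12]\to\mathbf 1[Z_\infty=0]=1-Z_\infty$).

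Finally I would let $k\to\infty$. Along every realization of the path the exact output alphabet is nondecreasing, so $\ell^\ast(k)$ is nondecreasing in $k$ and $\ell^\ast(k)\to\infty$ off the event $\{\ell^\ast=\infty\}$; and $Z_n$ is a bounded supermartingale (using $Z(W^+)=Z(W)^2$ and $Z(W^-)\le 2Z(W)-Z(W)^2$ from \cite{ari09}), hence converges almost surely to some $Z_\infty$ with, by Ar{\i}kan's polarization theorem, $Z_\infty\in\{0,1\}$ and $\PP[Z_\infty=0]=I(W)$. Therefore $Z_{\ell^\ast(k)}\to Z_\infty$ almost surely, bounded convergence gives $\EE[Z_{\ell^\ast(k)}]\to\EE[Z_\infty]=1-I(W)$, and so $\rho(Q_k,W)\to I(W)$; in particular $\rho(Q_k,W)$ lies within any prescribed distance of $I(W)$ once $k$ is large.

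The crux is precisely the interchange of limits: quantization errors cannot be controlled uniformly over the infinitely many levels defining $\rho$. The BEC device circumvents this by making the whole tail of the tree exactly solvable, so the only residual loss is the single gap between the capacity of a polarized channel and that of the BEC with the same $Z$, and this gap is governed by how far polarization has already progressed by level $\ell^\ast(k)$, which grows with $k$. Everything else is standard: the action of \eqref{1}--\eqref{2} on BECs, the erasure-probability martingale, Ar{\i}kan's polarization theorem, and the elementary fact that $Z$ and $I$ take extremal values together.
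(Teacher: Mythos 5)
Your proof is correct, and it takes a genuinely different route from the paper's. The paper's argument runs the merging Algorithm~\ref{merge} (or~\ref{mass}) at every node, switches a sub-channel to a BEC once its Bhattacharyya parameter falls below a threshold $\delta$, and then invokes Theorem~\ref{thm:1} (hence Lemma~\ref{lem:error}) to guarantee that, for $k$ large and a fixed level $n$, the fraction of sub-channels that cross the $\delta$-threshold under the quantized recursion is close to the unquantized fraction; one then sends $n\to\infty$ and $\delta\to0$. Your scheme instead takes $Q_k$ to be the identity until the (reduced) alphabet exceeds $k$, and a one-shot BEC replacement at that point; this makes the quantized tree coincide exactly with the true tree up to the random level $\ell^\ast(k)$, after which the tail is an exactly solvable BEC recursion. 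The payoff is that you bypass Lemma~\ref{lem:error} and Theorem~\ref{thm:1} entirely and replace the quantitative error bookkeeping with a soft argument: the closed-form identity $\rho(Q_k,W)=1-\EE[Z_{\ell^\ast(k)}]$, the a.s.\ convergence $Z_{\ell^\ast(k)}\to Z_\infty$ as $k\to\infty$, and bounded convergence. Both proofs rest on the same key device (BEC replacement is a polar degradation and makes the tail of the tree explicitly computable), but yours decouples Theorem~\ref{thm2} from the error analysis of Section~\ref{sec:error1}, whereas the paper's proof reuses that analysis and therefore also illustrates that the \emph{specific} merging algorithms of Section~\ref{sec:algo} (and not only the identity-until-overflow scheme) can be made capacity-achieving.

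Two small points you may want to tighten. First, your claim that ``the exact output alphabet is nondecreasing'' is literally true for the raw alphabets $\mathcal Y\mapsto\mathcal Y^2$ and $\mathcal Y\mapsto\{0,1\}\times\mathcal Y^2$, but the quantizer acts on the reduced representation (the mass distribution on $[0,\tfrac12]$), and monotonicity of the \emph{reduced} alphabet size is less obvious; what you actually need is only that $\ell^\ast(k)\to\infty$ pointwise along paths where the reduced alphabet is unbounded, which holds without monotonicity. Second, the event $\{\ell^\ast=\infty\}$ depends on $k$; the clean statement is that $\{\ell^\ast(k)=\infty\}$ increases to the event that the reduced alphabet stays bounded along the path, on which no quantization ever occurs and your convention $Z_{\ell^\ast}:=Z_\infty$ applies. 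Neither point affects the conclusion.
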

\begin{proof}
Consider the following algorithm:  The algorithm starts with a quantized version of $W$ and it does the normal channel 
splitting transformation followed by quantization according to Algorithm~\ref{mass} or \ref{merge}, but
once a sub-channel is sufficiently good, in the sense that its Bhattacharyya parameter is less than an appropriately chosen parameter $\delta$, the algorithm replaces the sub-channel with a binary erasure channel which is  degraded (polar degradation) with respect to it (As the operations \eqref{1} and \eqref{2} over an erasure channel also yields and erasure channel, no further quantization is need for the children of this sub-channel). 

Since the ratio of the total good indices of BEC($Z(P)$) is $1-Z(P)$, then the total error that we make by replacing $P$ with BEC($Z(P)$) is at most $Z(P)$
which in the above algorithm is less that the parameter $\delta$.   

Now, for a fixed level $n$, according to Theorem \ref{thm:1} if we make $k$ large enough, the ratio of the quantized sub-channels that their Bhattacharyya value is less that $\delta$ approaches to its original value (with no quantization), and for these sub-channels as explained above the total error made with the algorithm 
is $\delta$. Now from the polarization theorem and by sending $\delta$ to zero we deduce that as $k \to \infty$ the number of good indices approaches the capacity of the original channel.  
\end{proof}

%\begin{rem}
%We can apply the same kind of algorithm to replace channels with stochastically degraded versions of them and still get the capacity for large $k$. One such algorithm could be a slightly modified version of the above algorithm in which when the Bhattacharyya parameter of a channel is less than an adjustable parameter $\delta$, then instead of replacing it with a BEC, we replace it  with a channel with density $a=2\sqrt{\delta}\Delta_0 + (1-2\sqrt{\delta}) \Delta_{1-\delta}$.  
%Firstly, we should prove that 
%if $P$ is a BMS channel with $Z(P) < \delta$ , then the channel $A$ with density $a=2\sqrt{\delta}\Delta_0 + (1-2\sqrt{\delta}) \Delta_{1-\delta}$ is degraded with respect to $P$.  This is because, if we assume that $P$ is the random variable that has the same distribution as the density of channel $P$, then we have $\mathbb{E}(\sqrt{1-P^2}) \leq \delta$ and an application of the Markov inequality shows that $\mathbb{P} (P \geq 1- \delta) \geq 1- 2 \sqrt{\delta}$. Now, it is easy to see that channel $A$ is degraded with %respect to $P$.
%\end{rem}

\section{Simulation Results}

In order to evaluate the performance of our quantization algorithm, similarly to \cite{TV11}, we compare the performance of the degraded quantized channel with the performance of an upgraded quantized channel. An algorithm similar to Algorithm \ref{merge} for upgrading a channel is the following. Consider three neighboring masses in positions $(x_{i-1},x_i,x_{i+1})$ with probabilities $(p_{i-1},p_i,p_{i+1})$. Let $t = \frac{x_i - x_{i-1}}{x_{i+1}-x_{i-1}}$. Then, we split the middle mass at $x_i$ to the other two masses such that the final probabilities will be $(p_{i-1}+(1-t)p_i,p_{i+1}+t p_i)$ at positions $(x_{i-1},x_{i+1})$. The greedy channel upgrading procedure is described in Algorithm~\ref{split}.  

\begin{algorithm}
\caption{Splitting Masses Algorithm}
\label{split}
\begin{algorithmic}[1]
\STATE Start from the list $(p_1,x_1), \cdots , (p_m,x_m)$.
\STATE Repeat $m-k$ times
\STATE Find $ j = \text{argmin} \{ p_i ( f(x_i) - tf(x_{i+1}) - (1-t)f(x_{i-1}) ): i \neq 1,m \} $
\STATE Add $(1-t)p_j$ to $p_{j-1}$ and $tp_j$ to $p_{j+1}$.
\STATE Delete $(p_j, x_j)$ from the list.

\end{algorithmic}
\end{algorithm} 

The same upper bounds on the error of this algorithm can be provided similarly to Section \ref{sec:error1} with a little bit of modification. 

In the simulations, we measure the maximum achievable rate while keeping the probability of error less than $10^{-3}$ by finding maximum possible number of channels with the smallest Bhattacharyya parameters such that the sum of their Bhattacharyya parameters is upper bounded by $10^{-3}$. The channel is a binary symmetric channel with capacity $0.5$. Using Algorithms \ref{merge} and \ref{split} for degrading and upgrading the channels with the Bhattacharyya function $f(x) = 2 \sqrt{x(1-x)}$, we obtain the following results: 

\begin{table}[h!]
% title of Table
\centering

\begin{tabular}{c | c c c c c c}
% centered columns (7 columns)

%inserts double horizontal lines
$k$ & $2$ & $4$ & $8$&  $16$ &   $32$  &  $64$  \\
  
 \hline
%heading
% inserts single horizontal line
degrade &  $0.2895$ &   $0.3667$ & $0.3774 $&  $0.3795$ &  $0.3799$  &$0.3800$  \\ 
upgrade  & $0.4590$ &  $0.3943$ &$0.3836$  &$0.3808$   &$0.3802$ & $0.3801$ 
%\hline
%inserts single line
\end{tabular}

\caption{{\small Achievable rate with error probability at most $10^{-3}$ vs. maximum number of output symbols $k$ for block-length $N=2^{15}$ }}

%\caption{{\small  r }}
\vskip 0.15cm
\label{table1}
% is used to refer this table in the text
\end{table} 

It is worth restating that the algorithm runs in complexity $\mathcal{O}(k^2 N)$. Table~\ref{table1} shows the achievable rates for Algorithms \ref{merge} and \ref{split} when the block-length is fixed to $N=2^{15}$ and $k$ changes in the range of $2$ to $64$.     

It can be seen from Table~\ref{table1} that the difference of achievable rates within the upgraded and degraded version of the scheme is as small as $10^{-4}$ for $k=64$. We expect that for a fixed $k$, as the block-length increases the difference will also increase (see Table~\ref{table2}).
\begin{table}[h!]
% title of Table
\centering

\begin{tabular}{c | c c c c c c}
% centered columns (7 columns)

%inserts double horizontal lines
$n$ & $5$ & $8$ & $11$&  $14$ &   $17$  &  $20$\\
  
 \hline
%heading
% inserts single horizontal line
degrade &  $0.1250$ &   $0.2109$ & $0.2969 $&  $0.3620$ &  $0.4085$ & $0.4403$ \\ 
upgrade  & $0.1250$ &  $0.2109$ &$0.2974$  &$0.3633$   &$0.4102$ & $0.4423$
%\hline
%inserts single line
\end{tabular}

\caption{{\small Achievable rate with error probability at most $10^{-3}$ vs.  block-length $N=2^n$ for $k=16$ }}

%\caption{{\small  r }}
\vskip 0.15cm
\label{table2}
% is used to refer this table in the text
\end{table} 

 However, in our scheme this difference will remain small even as $N$ grows arbitrarily large as predicted by Theorem \ref{thm2}. (see Table~\ref{table3}).

\begin{table}[h!]
% title of Table
\centering

\begin{tabular}{c | c c c c c}
% centered columns (7 columns)

%inserts double horizontal lines
$n$ & $21$ & $22$ & $23$&  $24$ &   $25$\\
  
 \hline
%heading
% inserts single horizontal line

degrade &  $0.4484$ &   $0.4555$ & $0.4616 $&  $0.4669$ &  $0.4715$\\ 
upgrade  & $0.4504$ &  $0.4575$ &$0.4636$  &$0.4689$   &$0.4735$
%\hline
%inserts single line
\end{tabular}

\caption{{\small Achievable rate with error probability at most $10^{-3}$ vs.  block-length $N=2^n$ for $k=16$ }}

%\caption{{\small  r }}
\vskip 0.15cm
\label{table3}
% is used to refer this table in the text
\end{table} 

We see that the difference between the rate achievable in the degraded channel and upgraded channel gets constant $2 \times 10^{-3}$ even after $25$ levels of polarizations for $k=16$.  

\appendix

\subsection{Proof of Lemma \ref{merge}}
\begin{proof}
Let us first find an upper bound for the second derivative of the entropy function. Suppose that $h(x) = -x\log(x) - (1-x)\log(1-x)$. Then, for $0 \leq x \leq \frac{1}{2}$, we have
\begin{align}\label{derivative}
|h''(x)| = \frac{1}{x(1-x)\ln(2)} \leq \frac{2}{x \ln(2)}.
\end{align}
Using \eqref{derivative} the minimum error can further be upper bounded by
\begin{align}
\min_{i} e_{i} \leq \min_{i} (p_i + p_{i+1}) \Delta x_i^2 \frac{1}{x_i \ln(4)}.
\end{align}
Now suppose that we have $l$ mass points with $x_i \leq \frac{1}{\sqrt{m}}$ and $m-l$ mass points with $x_i \geq \frac{1}{\sqrt{m}}$. For the first $l$ mass points we use the upper bound obtained for Algorithm \ref{mass}. Hence, for $1 \leq i \leq l$ we have
\begin{align} \label{l1}
\min_{i} e_{i} & \leq \min_{i} p_i \Delta h(x_i) \\ \label{l2}
& \sim \mathcal{O} \left( \frac{\log(m)}{l^2 \sqrt{m}} \right),
\end{align}
where \eqref{l1} is due to \eqref{outperform} and \eqref{l2} can be derived again by applying Cauchy-Schwarz inequality. Note that this time 
\begin{align}
\sum_{i=1}^{l} \Delta h(x_i) \leq h(\frac{1}{\sqrt{m}}) \sim \mathcal{O} \left( \frac{\log(m)}{\sqrt{m}} \right).
\end{align}

For the $m-l$ mass points one can write
\begin{align}
\min_{i} e_i &\leq \min_{i} (p_i + p_{i+1}) \Delta x_i^2 \frac{1}{x_i \ln(4)} \\
& \leq  \min_{i} (p_i + p_{i+1}) \Delta x_i^2 \frac{\sqrt{m}}{\ln(4)} \\ \label{holder}
& \sim \mathcal{O} \left(\frac{\sqrt{m}}{(m-l)^3}\right),
\end{align}
where \eqref{holder} is due to H\"{o}lder's inequality as follows:

Let $q_i = p_i + p_{i+1}$. Therefore, $\sum_{i}(p_i + p_{i+1}) \leq 2$ and $\sum_{i} \Delta x_i \leq 1/2$. 
\begin{equation}
\min_{i} q_i \Delta x_i^2 = \left( \left(\min_{i} q_i \Delta x_i^2 \right)^{1/3} \right )^3 
= \left(\min_{i} \left( q_i \Delta x_i^2 \right)^{1/3} \right) ^3
\end{equation}
Now by applying H\"{o}lder's inequality we have
\begin{align}
\sum_{i} \left( q_i \Delta x_i^2 \right)^{1/3} \leq \left(\sum_{i} q_i \right)^{1/3} \left(\sum_{i} \Delta x_i \right)^{2/3} \leq 1 
\end{align}
Therefore,  
\begin{align}
\min_i e_{i} \leq \sqrt{m} \left( \min_{i} ( q_i \Delta x_i^2)^{1/3} \right) ^3 \sim \mathcal{O} \left( \frac{\sqrt{m}}{(m-l)^3} \right) .
\end{align}

Overall, the error made in the first step of the algorithm would be 
\begin{align}
\min_i e_i & \sim \min \left \{ \mathcal{O} \left( \frac{\log(m)}{l^2 \sqrt{m}} \right),\mathcal{O} \left(\frac{\sqrt{m}}{(m-l)^3} \right) \right \} \\
& \sim \mathcal{O} \left(\frac{\log(m)}{m^{2.5}}\right).
\end{align}
Thus, the error generated by running the whole algorithm can be upper bounded by $\sum_{i = k+1}^m \frac{\log(i)}{i^{2.5}}
  \sim O\left(\frac{\log(k)}{k^{1.5}}\right)$.

\end{proof}

\section*{Acknowledgments}

% Use section* for acknowledgments.
{\small EPFL authors are grateful to R\"{u}diger Urbanke for helpful discussions. 
This work was supported in part by grant number $200021$-$125347$ of the Swiss National Science Foundation.}

\end{document}